\newtheorem{thm}{Theorem}
\newtheorem{lem}{Lemma}
\newtheorem{rk}{Remark}
\newtheorem{ex}{Example}
\begin{document}

\begin{frontmatter}

\title{Two classes of linear codes and their generalized Hamming weights}

\author[mymainaddress]{Gaopeng Jian\corref{mycorrespondingauthor}}
\cortext[mycorrespondingauthor]{Corresponding author}   
\ead{gpjian@pku.edu.cn}
   
\author[mymainaddress]{Zhouchen lin}  
\ead{zlin@pku.edu.cn}
  
\author[mysecondaryaddress]{Rongquan Feng}   
\ead{fengrq@math.pku.edu.cn}  
  
\address[mymainaddress]{Key Laboratory of Machine Perception(MOE), School of EECS, Peking University, Beijing 100871, P.R.China}  
\address[mysecondaryaddress]{LMAM, School of Mathematical Sciences, Peking University, Beijing 100871, P.R.China}  

\begin{abstract}
The generalized Hamming weights (GHWs) are fundamental parameters of linear codes. 
In this paper, we investigate the generalized Hamming weights of two classes of linear codes constructed from defining sets and determine them completely employing a number-theoretic approach.
\end{abstract}

\begin{keyword}
Linear codes \sep Generalized Hamming weights \sep  Exponential sums
\end{keyword}

\end{frontmatter}

\section{Introduction}
Let $\mathbb{F}_p$ be the finite field with $p$ elements, where $p$ is an odd prime.
An $[n,k]$ linear code $C$ over $\mathbb{F}_p$ is a $k$-dimensional subspace of the linear
space $\mathbb{F}_p^n$. 
For any linear subcode $U \subset C$, the support of $U$ is defined to be
\[
\text{Supp}(U) = \{i : 1 \le i \le n, \ x_i \neq 0 \text{ for some } x=(x_1,x_2,\ldots,x_n) \in U \}.
\]
For $1 \le r \le k$, the $r$-th generalized Hamming weight (GHW) of $C$ is given by
\[
d_r (C)=\min \{|\text{Supp}(U)|: U \subset C, \  \text{dim}_{\mathbb{F}_p}(U)=r\},
\]
where $|\text{Supp}(U)|$ denotes the cardinality of  $\text{Supp}(U)$. 
By definition, $d_1(C)$ is just the minimum distance of $C$. 
The set $\{d_r (C) : 1 \le r \le k\}$ is called the weight hierarchy of $C$.

The concept of GHWs was first introduced in \cite{helleseth1977weight,klove1978weight}, and rediscovered by Wei \cite{wei1991generalized} to fully characterize the performance of linear codes when used in a wire-tap channel of type II or as a t-resilient function. 
Indeed, the GHWs provide detailed structural information of linear codes, which can also be used to compute the state complexity of trellis diagrams for linear block codes \cite{forney1994dimension}, to determine the erasure list-decodability of linear codes \cite{guruswami2003list} and so on.

In general, the determination of weight hierarchy is very difficult and there are only a few classes of linear codes whose weight hierarchies are known (see Section \ref{clu}). 
In this paper, we construct two classes of linear codes and determine their weight hierarchies using exponential sums. In some cases they are optimal with respect to known bounds.

The rest of this paper is organized as follows. 
In Section \ref{pl}, we review basic concepts and results on exponential sums together with previously known results on GHWs.
In Section \ref{wt}, we investigate the generalized Hamming weights of two classes of linear codes. 
In Section \ref{clu}, we conclude this paper.

\section{Preliminaries} \label{pl}
\subsection{Characters, cyclotomic classes and exponential sums over finite fields} 
We introduce several basic results on characters, cyclotomic classes and exponential sums used in this paper. 
For more details, please see Chapter 5 of the book \cite{lidl1997finite}.

Let $\mathbb{F}_q$ be the finite field with $q$ elements, where $q$ is a power of a prime $p$.
Define the \emph{canonical additive character} of $\mathbb{F}_q$ as
\[
\chi: \mathbb{F}_q \longrightarrow \mathbb{C}^*, \ \chi(x) = \zeta_p^{\text{Tr}_{q/p}(x)},
\]
where $\zeta_p=e^{\frac{2 \pi \sqrt{-1}}{p}}$ is the primitive $p$-th root of unity and $\text{Tr}_{q/p}$ is the trace function from $\mathbb{F}_q$ to $\mathbb{F}_p$.
The orthogonal property of additive characters is given by
\[
\sum_{x \in \mathbb{F}_q} \chi(ax)=\begin{cases}
q, & \text{if } a=0, \\
0, & \text{if } a \in \mathbb{F}_q^*.
\end{cases}
\]

For $\mathbb{F}^*_q=\langle \alpha \rangle$ and a positive integer $N>1$ such that $N|(q-1)$, we define
\[
C_i^{(N,q)}=\alpha^i \langle \alpha^N \rangle, \ i=0,1,\ldots,N-1,
\]
where $\langle \alpha^N \rangle$ denotes the cyclic subgroup of $\mathbb{F}_q^*$ generated by $\alpha^N$.
The cosets $C_i^{(N,q)}$ are called the \emph{cyclotomic classes} of order $N$ in  $\mathbb{F}_q$. 

The \emph{Gaussian periods} of order $N$ over $\mathbb{F}_q$ are defined by
\[
\eta_i^{(N,q)}=\sum_{x \in C_i^{(N,q)}} \chi(x),\ i=0,1,\ldots,N-1.
\]

\begin{lem}  \label{l1}
Suppose that $q=p^m$ and $N=2$. The Gaussian periods are given by
\[
\eta_0^{(2,q)}=\begin{cases}
\frac{-1+(-1)^{m-1}\sqrt{q}}{2}, & \text{if} \ p \equiv 1 \pmod{4},\\
\frac{-1+(-1)^{m-1}(\sqrt{-1})^{m} \sqrt{q}}{2}, & \text{if} \ p \equiv 3 \pmod{4}
\end{cases}
\]
and
\[
\eta_1^{(2,q)}=-1-\eta_0^{(2,q)}.
\]
\end{lem}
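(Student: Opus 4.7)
The plan is to reduce the computation of $\eta_0^{(2,q)}$ and $\eta_1^{(2,q)}$ to the classical evaluation of the quadratic Gauss sum over $\mathbb{F}_q$. By definition, $C_0^{(2,q)}$ is the set of nonzero squares of $\mathbb{F}_q^*$ and $C_1^{(2,q)}$ is the set of nonsquares. Let $\eta$ denote the quadratic multiplicative character of $\mathbb{F}_q^*$, extended by $\eta(0)=0$. Then $\eta(x)=1$ for $x\in C_0^{(2,q)}$ and $\eta(x)=-1$ for $x\in C_1^{(2,q)}$, so the quadratic Gauss sum can be written as
\[
G(\eta):=\sum_{x\in\mathbb{F}_q^*}\eta(x)\chi(x)=\eta_0^{(2,q)}-\eta_1^{(2,q)}.
\]

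First I would record the companion identity obtained from the orthogonality relation for the canonical additive character: since $\sum_{x\in\mathbb{F}_q}\chi(x)=0$, we have
\[
\eta_0^{(2,q)}+\eta_1^{(2,q)}=\sum_{x\in\mathbb{F}_q^*}\chi(x)=-1.
\]
Together with the expression for $G(\eta)$ above, this yields the linear system
\[
\eta_0^{(2,q)}=\frac{-1+G(\eta)}{2},\qquad \eta_1^{(2,q)}=\frac{-1-G(\eta)}{2}=-1-\eta_0^{(2,q)}.
\]
Hence everything is reduced to evaluating $G(\eta)$.

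The main step, and the only nontrivial one, is to plug in the classical evaluation of the quadratic Gauss sum over $\mathbb{F}_q$ with $q=p^m$ (Theorem 5.15 in Lidl--Niederreiter \cite{lidl1997finite}), namely
\[
G(\eta)=\begin{cases}(-1)^{m-1}\sqrt{q}, & p\equiv 1\pmod 4,\\ (-1)^{m-1}(\sqrt{-1})^{m}\sqrt{q}, & p\equiv 3\pmod 4.\end{cases}
\]
Substituting these two values into the formula for $\eta_0^{(2,q)}$ gives precisely the two cases stated, and the formula $\eta_1^{(2,q)}=-1-\eta_0^{(2,q)}$ follows from the additive identity above.

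The only potential obstacle is the evaluation of the quadratic Gauss sum itself, whose proof (Stickelberger--Davenport--Hasse lifting from $\mathbb{F}_p$ to $\mathbb{F}_{p^m}$, combined with the classical sign determination of Gauss) is standard; since we may quote it directly from the reference cited in this section, no further work is required beyond the linear algebra above.
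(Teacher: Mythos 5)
Your proposal is correct: the paper states Lemma \ref{l1} without proof, simply citing Chapter 5 of \cite{lidl1997finite}, and your derivation via the linear system $\eta_0^{(2,q)}+\eta_1^{(2,q)}=-1$, $\eta_0^{(2,q)}-\eta_1^{(2,q)}=G(\eta)$ together with the classical evaluation of the quadratic Gauss sum (Theorem 5.15 of Lidl--Niederreiter) is exactly the standard argument behind the quoted result. Nothing is missing.
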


Let $M \ge 2$ be an integer and $q=p^{2fh}$, where $f$ is the least positive integer satisfying $p^f \equiv -1 \pmod M$ and $\text{gcd}(h,p)=1$. 
The values of the following exponential sum
\[
\Omega(a,b)=\sum_{x \in \mathbb{F}_q^*}\chi \left( ax^{\frac{q-1}{M}}+bx \right), \ a \in \mathbb{F}_q^*, \ b \in \mathbb{F}_q
\] 
were determined in terms of Gauss periods in \cite{li2015walsh}.
We summarize the results in the following lemma.

\begin{lem}\label{l2}
Let $a \in C_t^{(d,q)}$ for some $t$, where $d=\frac{q-1}{M}$.
\begin{enumerate}
\item If b=0, then
\[
\Omega(a,0)=d\eta_t^{(d,q)}.
\]
\item If $b \in C_{\delta}^{(M,q)}$, $0 \le \delta \le M-1$, then
\[
\Omega(a,b)=\sqrt{q} \chi(-a \alpha^{-d \delta})-\frac{(\sqrt{q}+1)\eta_t^{(d,q)}}{M}
\]
if $p$, $h$ and $\frac{p^f+1}{M}$ are odd and
\[
\Omega(a,b)=(-1)^{h-1}\sqrt{q} \chi(a \alpha^{-d \delta})+\frac{((-1)^h\sqrt{q}-1)\eta_t^{(d,q)}}{M}
\] 
otherwise.
\end{enumerate}
\end{lem}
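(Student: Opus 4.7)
The strategy is to decompose $\mathbb{F}_q^*$ along the $M$ cosets $C_r^{(M,q)} = \alpha^r\langle\alpha^M\rangle$, $r=0,\ldots,M-1$, and to exploit a collapse of the exponent $d = (q-1)/M$: for any $x = \alpha^{r+sM}\in C_r^{(M,q)}$ one has $x^d = \alpha^{rd}\alpha^{sMd} = \alpha^{rd}$, since $Md = q-1$. Thus $\chi(ax^d)$ is constant on each coset and factors out of the inner sum. For part (1), where $b=0$, each coset has exactly $d$ elements, giving
\[
\Omega(a,0) = d\sum_{r=0}^{M-1}\chi(a\alpha^{rd}).
\]
As $r$ varies, $a\alpha^{rd}$ traces out the coset $a\langle\alpha^d\rangle = C_t^{(d,q)}$, so the sum on the right equals $\eta_t^{(d,q)}$, which proves (1).

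For part (2), with $b\in C_\delta^{(M,q)}$, multiplication by $b$ sends $C_r^{(M,q)}$ bijectively onto $C_{r+\delta}^{(M,q)}$ (indices mod $M$), so the same decomposition yields
\[
\Omega(a,b) = \sum_{r=0}^{M-1}\chi(a\alpha^{rd})\,\eta_{r+\delta}^{(M,q)}.
\]
I would then invoke the classical semi-primitive evaluation of the Gauss periods $\eta_i^{(M,q)}$: under the hypothesis that $q=p^{2fh}$ with $f$ minimal such that $p^f\equiv -1\pmod M$, the $M$ periods $\eta_i^{(M,q)}$ take only two distinct values, one at a single distinguished index $i^*$ and a common value at the remaining $M-1$ indices, with both $i^*$ and the explicit values dictated by the parities of $p$, $h$ and $(p^f+1)/M$. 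Substituting, isolating the unique term with $r+\delta \equiv i^* \pmod M$, and pulling the common non-principal Gauss period out of the remaining $M-1$ terms, I reduce the residual sum via the Part~(1) identity $\sum_{r=0}^{M-1}\chi(a\alpha^{rd}) = \eta_t^{(d,q)}$. The two parity regimes then match the two formulas in the statement: the ``otherwise'' case has $i^* = 0$, yielding the argument $a\alpha^{-d\delta}$, while the ``all odd'' case has $i^* = M/2$ (which requires $M$ even, forced by $p$ odd together with $(p^f+1)/M$ odd), yielding $-a\alpha^{-d\delta}$ because $\alpha^{dM/2} = -1$.

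The principal obstacle is the careful sign bookkeeping across the two parity subcases---in particular, pinning down the distinguished index $i^*$ and the coefficient $\pm\sqrt q$ that accompanies the Gauss period values. Once the semi-primitive formulas are in hand, the coset-collapse identity from Part~(1) does all the work of converting the Gauss-period expression back into the closed form stated in the lemma.
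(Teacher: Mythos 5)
Your proposal is correct. Note first that the paper itself offers no proof of this lemma: it is quoted verbatim from the reference [li2015walsh] ("We summarize the results in the following lemma"), so there is no in-paper argument to compare against. Your derivation is, however, exactly the standard one (and essentially the one used in that reference): the coset collapse $x^d=\alpha^{rd}$ for $x\in C_r^{(M,q)}$ gives $\Omega(a,0)=d\sum_{r=0}^{M-1}\chi(a\alpha^{rd})=d\,\eta_t^{(d,q)}$ since $\alpha^d$ has order $M$ and $a\langle\alpha^d\rangle=C_t^{(d,q)}$; and for $b\in C_\delta^{(M,q)}$ the same decomposition yields $\Omega(a,b)=\sum_{r=0}^{M-1}\chi(a\alpha^{rd})\,\eta_{r+\delta}^{(M,q)}$. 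Feeding in the semi-primitive (Baumert--McEliece/Myerson) evaluation of $\eta_i^{(M,q)}$ — the distinguished index being $i^*=M/2$ with value $\frac{(M-1)\sqrt q-1}{M}$ and common value $-\frac{\sqrt q+1}{M}$ in the all-odd case, and $i^*=0$ with value $\frac{(-1)^{h-1}(M-1)\sqrt q-1}{M}$ and common value $\frac{(-1)^h\sqrt q-1}{M}$ otherwise — and isolating the single term with $r\equiv i^*-\delta$, the coefficient of the isolated character value telescopes to $\sqrt q$ (resp.\ $(-1)^{h-1}\sqrt q$) while the remaining terms reassemble into $\eta_t^{(d,q)}$ via the Part (1) identity; your observation that $\alpha^{dM/2}=\alpha^{(q-1)/2}=-1$ correctly accounts for the sign $\chi(-a\alpha^{-d\delta})$ in the all-odd case, and your remark that $M$ is then forced to be even is also right. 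The only ingredient you invoke without proof is the semi-primitive Gauss-period formula, which is a classical result of the same standing as the paper's own unproved Lemma \ref{l1}, so this is an acceptable dependency.
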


\subsection{Bounds and formulas of GHWs}
Here we present three bounds on GHWs of linear codes. The reader may refer to the literature \cite{tsfasman1995geometric} for them.
\begin{lem}
Let $C$ be an $[n,k]$ linear code over $\mathbb{F}_p$. 
For $1 \le r \le k$,
\begin{description}
\item[(1) Singleton type bound:]
\[
r \le d_r(C) \le n-k+r.
\]
$C$ is called an $r$-MDS code if $d_r(C)=n-k+r$.
\item[(2) Plotkin-like bound:]
\[
d_r(C) \le \left \lfloor \frac{n(p^r-1)p^{k-r}}{p^k-1} \right \rfloor.
\] 
\item[(3) Griesmer-like bound:]
\[
d_r(C) \ge \sum_{i=0}^{r-1} \left \lceil \frac{d_1(C)}{p^i} \right \rceil.
\]
\end{description}
\end{lem}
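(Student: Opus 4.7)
The lemma collects three classical facts, so my plan is to prove each bound separately, with the underlying tools being (i) projection onto coordinates, (ii) double counting with Gaussian binomials, and (iii) a residual-code induction. None of the three arguments is original; the main work is to set them up cleanly.

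For the Singleton type bound, the lower inequality $r\le d_r(C)$ follows because projection onto the support coordinates restricts an $r$-dimensional subcode $U$ to an injective linear map into $\mathbb F_p^{|\mathrm{Supp}(U)|}$, forcing $r\le|\mathrm{Supp}(U)|$. For the upper inequality, I would fix any coordinate set $S\subseteq\{1,\dots,n\}$ with $|S|=n-k+r$ and consider $C_S=\{c\in C:\mathrm{Supp}(c)\subseteq S\}$, which is the kernel of the projection of $C$ onto the $n-|S|$ complementary coordinates; hence $\dim C_S\ge k-(n-|S|)=r$. Any $r$-dimensional subspace of $C_S$ witnesses $d_r(C)\le|S|=n-k+r$.

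For the Plotkin-like bound, I would double-count the pairs $(U,i)$ where $U$ is an $r$-dimensional subcode and $i\in\mathrm{Supp}(U)$. For each fixed coordinate $i$, the $U$'s that avoid $i$ are exactly the $r$-dimensional subspaces of $\ker(\pi_i|_C)$, whose dimension is $k-1$ (assuming no identically zero coordinate; otherwise the bound is trivial). Thus
\[
\sum_{\dim U=r}|\mathrm{Supp}(U)|=n\Bigl(\tbinom{k}{r}_p-\tbinom{k-1}{r}_p\Bigr)=n\,p^{k-r}\tbinom{k-1}{r-1}_p,
\]
using the standard Gaussian-binomial identity. Dividing by the total number $\binom{k}{r}_p$ of $r$-subcodes and simplifying $p^{k-r}\binom{k-1}{r-1}_p/\binom{k}{r}_p=p^{k-r}(p^r-1)/(p^k-1)$ gives an averaged upper bound on $d_r(C)$, which must therefore hold for the minimum as well; the floor appears because $d_r(C)$ is an integer.

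For the Griesmer-like bound, I would argue by induction on $r$, the base case $r=1$ being trivial. For the step, let $U$ achieve $d_r(C)$ and pick a codeword $c\in C$ of weight $d_1(C)$; puncture $C$ at the support of $c$ to obtain a code $C'$ of dimension $k-1$ whose minimum distance is at least $\lceil d_1(C)/p\rceil$ (each surviving coordinate averages out at most a $1/p$ fraction of the weight when one cyclically shifts by scalar multiples of $c$, giving a residual-weight lower bound). A careful version of this argument yields the recursion $d_r(C)\ge d_{r-1}(C')+\lceil d_1(C)/p^{r-1}\rceil$ (or an equivalent telescoping), and iterating recovers the stated sum. The main obstacle is the residual-code step: I have to verify that puncturing preserves enough dimension so that an $(r-1)$-dimensional subcode of the residual lifts back to an $r$-dimensional subcode of $C$ whose support size is controlled, and that the minimum distance really drops by at most a factor of $p$ at each stage. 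Once this residual lemma is pinned down, the induction runs mechanically.
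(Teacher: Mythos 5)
The paper does not prove this lemma at all: it is quoted as known, with a pointer to the survey literature (Tsfasman--Vl\u{a}du\c{t}), so there is no in-paper argument to compare against. Judged on their own, your proofs of (1) and (2) are correct and are the standard ones. For (1), the injectivity of the restriction of $U$ to $\mathrm{Supp}(U)$ gives the lower bound, and the shortening argument $\dim C_S\ge k-(n-|S|)$ gives the upper bound. For (2), the double count is right: $U$ avoids coordinate $i$ iff $U\subseteq\ker(\pi_i|_C)$, the Pascal identity $\binom{k}{r}_p-\binom{k-1}{r}_p=p^{k-r}\binom{k-1}{r-1}_p$ is correct, the ratio $\binom{k-1}{r-1}_p/\binom{k}{r}_p=(p^r-1)/(p^k-1)$ checks out, and identically-zero coordinates only decrease the sum, so the averaging inequality survives.

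Part (3) is where you have a genuine gap, and you flag it yourself. The recursion you propose, $d_r(C)\ge d_{r-1}(C')+\lceil d_1(C)/p^{r-1}\rceil$ with $C'$ the residual of $C$ at a minimum-weight codeword $c$, does not set up cleanly: the $r$-dimensional subcode $U$ achieving $d_r(C)$ need not contain $c$, so puncturing $C$ at $\mathrm{Supp}(c)$ gives you no control over $\mathrm{Supp}(U)$, and the lifting step you worry about really does fail in general. The standard repair avoids puncturing $C$ altogether: let $U$ attain $d_r(C)$ and restrict $U$ to its own support, obtaining a code of length $d_r(C)$, dimension $r$, and minimum distance at least $d_1(C)$ (every nonzero word of $U$ is a nonzero word of $C$). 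Now apply the \emph{classical} Griesmer bound to that code; since the Griesmer sum is nondecreasing in the minimum distance, this yields $d_r(C)\ge\sum_{i=0}^{r-1}\lceil d_1(C)/p^i\rceil$ directly. The residual-code induction you sketch is exactly the proof of the classical Griesmer bound, so it belongs inside this reduction, applied to $U$, not to $C$.
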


We recall a generic construction of linear codes proposed by Ding et al. \cite{ding2007cyclotomic}.
Let $D=\{d_1,d_2,\ldots,d_n \}\subset \mathbb{F}_q$, define a $p$-ary linear code of length $n$ by
\[
C_D=\{(\text{Tr}_{q/p}(a d_1), \text{Tr}_{q/p}(a d_2), \ldots, \text{Tr}_{q/p}(a d_n)): a \in \mathbb{F}_q\}.
\]
The set $D$ is called the \emph{defining set} of $C_D$.
Recently, many good linear codes were constructed by chosing appropriate defining sets \cite{tang2016linear,zhou2015linear,heng2017construction,ding2014binary,ding2015class,xu2018complete}. 

To calculate GHWs of the code $C_D$, we present two formulas which are essentially proved in \cite{yang2015generalized} (see also \cite{jian2017generalized,li2018class}).
For convenience we denote by $[L,r]_p$ the set of $r$-dimensional subspaces of $L$ for any $\mathbb{F}_p$-linear space $L$ and $H^*=H \backslash \{0\}$ for any $H \in [L,r]_p$.

\begin{lem} \label{t1}
Suppose that $q=p^m$ and $\text{dim}_{\mathbb{F}_p}(C_D)=m$.
For $1 \le r \le m$, 
\[
d_r(C_D)=n-\max\{ \ |D \cap H| : H \in [\mathbb{F}_q,m-r]_p \ \}.
\]
\end{lem}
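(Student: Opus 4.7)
The plan is to exploit the linear bijection between $\mathbb{F}_q$ and $C_D$ given by the evaluation map $\phi: a \mapsto (\mathrm{Tr}_{q/p}(ad_1),\dots,\mathrm{Tr}_{q/p}(ad_n))$, and then translate the support condition into a condition about an orthogonal complement with respect to the trace bilinear form. Since $\dim_{\mathbb{F}_p}(C_D)=m=\dim_{\mathbb{F}_p}(\mathbb{F}_q)$, the $\mathbb{F}_p$-linear map $\phi$ is injective, hence an $\mathbb{F}_p$-isomorphism onto $C_D$. Consequently, $\phi$ induces a bijection between $[\mathbb{F}_q,r]_p$ and the set of $r$-dimensional $\mathbb{F}_p$-subspaces of $C_D$: every $r$-dimensional subcode $U\subset C_D$ is of the form $U=\phi(V)$ for a unique $V\in[\mathbb{F}_q,r]_p$.

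Next, I would rewrite the support of $U=\phi(V)$ in terms of the trace form on $\mathbb{F}_q$. By definition, an index $i$ fails to lie in $\mathrm{Supp}(U)$ exactly when $\mathrm{Tr}_{q/p}(ad_i)=0$ for every $a\in V$. Introducing the trace-dual
\[
V^{\perp}=\{x\in\mathbb{F}_q:\mathrm{Tr}_{q/p}(ax)=0\text{ for all }a\in V\},
\]
this condition is simply $d_i\in V^{\perp}$. Hence
\[
|\mathrm{Supp}(U)|=n-|D\cap V^{\perp}|.
\]
Because the trace bilinear form $(a,x)\mapsto\mathrm{Tr}_{q/p}(ax)$ is non-degenerate on the $\mathbb{F}_p$-vector space $\mathbb{F}_q$, the map $V\mapsto V^{\perp}$ is an inclusion-reversing bijection between $[\mathbb{F}_q,r]_p$ and $[\mathbb{F}_q,m-r]_p$.

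Combining these two ingredients gives
\[
d_r(C_D)=\min_{U\subset C_D,\ \dim U=r}|\mathrm{Supp}(U)|
       =\min_{V\in[\mathbb{F}_q,r]_p}\bigl(n-|D\cap V^{\perp}|\bigr)
       =n-\max_{H\in[\mathbb{F}_q,m-r]_p}|D\cap H|,
\]
which is the desired formula. The proof is essentially routine linear algebra, and the only point that requires genuine care is justifying the identification $U\leftrightarrow V$: one must invoke the hypothesis $\dim_{\mathbb{F}_p}(C_D)=m$ to guarantee that $\phi$ is injective, so that dimensions are preserved and no subcode of $C_D$ is overcounted. Everything else follows from the non-degeneracy of the trace pairing and the standard bijection between a subspace and its orthogonal complement.
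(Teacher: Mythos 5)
Your proof is correct and is essentially the standard argument behind this lemma: the paper itself does not prove it but cites \cite{yang2015generalized}, and the argument there is exactly your identification of subcodes with subspaces of $\mathbb{F}_q$ via the (injective, by the dimension hypothesis) evaluation map, followed by the trace-dual correspondence $V \mapsto V^{\perp}$. All the steps you flag as needing care — injectivity of $\phi$, non-degeneracy of the trace pairing, and $\dim V^{\perp} = m - r$ — are handled correctly.
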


\begin{lem} \label{t2}
Suppose that $q=p^m$ and $\text{dim}_{\mathbb{F}_p}(C_D)=m$.
For $1 \le r \le m$, 
\[
d_r(C_D)=n-\max\{ \ N(H_r) : H_r \in [\mathbb{F}_q,r]_p \ \},
\]
where
\[
N(H_r)=\frac{n}{p^r}+\frac{1}{p^r} \sum_{a \in H_r^*} \sum_{x \in D} \chi (a x). 
\]
\end{lem}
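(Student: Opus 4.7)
The plan is to deduce Lemma \ref{t2} from Lemma \ref{t1} by translating the subspace-counting description into a character-sum description via the duality induced by the trace bilinear form.

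First, I would introduce, for an $\mathbb{F}_p$-subspace $H \subset \mathbb{F}_q$ of dimension $m-r$, the trace annihilator
\[
H^\perp = \{\, a \in \mathbb{F}_q : \text{Tr}_{q/p}(ax)=0 \text{ for all } x \in H \,\}.
\]
Since the trace pairing $(a,x)\mapsto \text{Tr}_{q/p}(ax)$ is a non-degenerate $\mathbb{F}_p$-bilinear form on $\mathbb{F}_q$, one has $\dim_{\mathbb{F}_p} H^\perp = m-(m-r)=r$ and $(H^\perp)^\perp = H$, so $H \mapsto H^\perp$ is a bijection between $[\mathbb{F}_q,m-r]_p$ and $[\mathbb{F}_q,r]_p$.

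Next, for each fixed $x \in \mathbb{F}_q$, I would evaluate the inner character sum $\sum_{a\in H^\perp}\chi(ax)$. The map $a\mapsto \text{Tr}_{q/p}(ax)$ is $\mathbb{F}_p$-linear on $H^\perp$; by the definition of the annihilator it is identically zero iff $x\in (H^\perp)^\perp=H$, in which case the sum equals $|H^\perp|=p^r$, while otherwise the map is surjective onto $\mathbb{F}_p$ and the sum vanishes by the orthogonality relation recalled at the start of Section \ref{pl}. Summing the indicator identity $\mathbf{1}_{x\in H}=\frac{1}{p^r}\sum_{a\in H^\perp}\chi(ax)$ over $x \in D$ and separating the $a=0$ contribution then yields
\[
|D\cap H| \;=\; \frac{n}{p^r}+\frac{1}{p^r}\sum_{a\in (H^\perp)^*}\sum_{x\in D}\chi(ax) \;=\; N(H^\perp).
\]

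Finally, I would substitute this identity into Lemma \ref{t1}: as $H$ ranges over $[\mathbb{F}_q,m-r]_p$, the subspace $H^\perp$ ranges bijectively over $[\mathbb{F}_q,r]_p$, so $\max_{H}|D\cap H|=\max_{H_r}N(H_r)$, and the claimed formula $d_r(C_D)=n-\max N(H_r)$ follows. There is no real obstacle here; the one point that merits explicit verification is the bijectivity and dimension count for the annihilator operation, but both facts are immediate from the non-degeneracy of the trace form and are essentially standard.
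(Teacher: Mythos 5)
Your proof is correct: the trace-annihilator duality $H \mapsto H^\perp$, the indicator identity $\mathbf{1}_{x\in H}=\frac{1}{p^r}\sum_{a\in H^\perp}\chi(ax)$, and the substitution into Lemma \ref{t1} are all sound, and this is exactly the standard argument behind the equivalence of the two formulas. The paper itself does not prove Lemma \ref{t2} but delegates it to \cite{yang2015generalized}, where the same duality-based derivation is used, so your proposal matches the intended approach.
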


\section{Main results and proofs} \label{wt}
From now on we fix the following notations.
\begin{itemize}
\item $q=p^m$, $p$ is an odd prime, $m$ is a positive integer with $\text{gcd}(m,p)=1$.
\item $\alpha$ is a primitive element of $\mathbb{F}_q$.
\item $d=1$ or $\frac{q-1}{p+1}$.
\item $s=\frac{m}{2}$, we only consider the case that $s$ is an odd integer if $d=\frac{q-1}{p+1}$.
\item $\zeta_p=e^{\frac{2 \pi \sqrt{-1}}{p}}$ is the primitive $p$-th root of unity. 
\item $\chi_h$ is the canonical additive character of $\mathbb{F}_{p^h}$. 
\item $\text{Tr}_e^h$ is the trace function from $\mathbb{F}_{p^h}$ to $\mathbb{F}_{p^e}$.
\item The defining set $D$ of the code $C_D$ is
\[
D=\{ x \in \mathbb{F}_q^*: \text{Tr}^m_1(x^d)=0 \}.
\]
\item $\widetilde{D}=D \cup \{0\}$.
\end{itemize}

Let $c$ be a mapping from $\mathbb{F}_q$ to $\mathbb{F}_p^n$ defined by
\[
c(a)=(\text{Tr}^m_1(a x))_{x \in D}, 
\]
for each $a \in \mathbb{F}_q$. 
Obviously, $c$ is $\mathbb{F}_p$-linear and the image of $c$ is $C_D$.
If $d=\frac{q-1}{p+1}$, we shall see that $c$ is injective and thus induces a 1-1 correspondence between $[\mathbb{F}_q,r]_p$ and $[C_D,r]_p$ for $1 \le r \le m$.
If $d=1$, however $c$ is not injective and the kernel of $c$ is $\mathbb{F}_p$.
It's easy to see that $\widetilde{D}$ is a $(m-1)$-dimensional subspace of $\mathbb{F}_q$ and $\mathbb{F}_q=\widetilde{D} \oplus \mathbb{F}_p$. 
Then $c$ induces a 1-1 correspondence between $[\widetilde{D},r]_p$ and $[C_D,r]_p$ for $1 \le r \le m-1$.

By the orthogonal property of additive characters we can determine the length and dimension of $C_D$.
\begin{thm}
Let $n=|D|$ and $k=\text{dim}_{\mathbb{F}_p}(C_D)$, then
\[
n=\begin{cases}
p^{m-1}-1, & \text{if } d=1, \\
0, & \text{if $d=\frac{q-1}{p+1}$ and $p \equiv 1 \pmod{4}$}, \\
\frac{2(q-1)}{p+1} & \text{if $d=\frac{q-1}{p+1}$ and $p \equiv 3 \pmod{4}$},
\end{cases}
\]
and
\[
k=\begin{cases}
m-1, & \text{if } d=1, \\
0, & \text{if $d=\frac{q-1}{p+1}$ and $p \equiv 1 \pmod{4}$}, \\
m, & \text{if $d=\frac{q-1}{p+1}$ and $p \equiv 3 \pmod{4}$}.
\end{cases}
\]
\end{thm}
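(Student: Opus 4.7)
For the length $n=|D|$, the plan is to apply orthogonality of the canonical additive character of $\mathbb{F}_q$. Writing the indicator $[\text{Tr}_1^m(x^d)=0]$ as $\frac{1}{p}\sum_{y\in\mathbb{F}_p}\chi_m(yx^d)$ gives
\[
n = \frac{q-1}{p} + \frac{1}{p}\sum_{y\in\mathbb{F}_p^*}\sum_{x\in\mathbb{F}_q^*}\chi_m(yx^d).
\]
For $d=1$ the inner sum equals $-1$ and $n=p^{m-1}-1$ drops out. For $d=(q-1)/(p+1)$ the map $x\mapsto x^d$ is $d$-to-one onto the group $C_0^{(d,q)}$ of $(p+1)$-st roots of unity, which lies in $\mathbb{F}_{p^2}^*$; pushing the $y$-sum inside and reapplying $\mathbb{F}_p$-orthogonality reduces $n$ to $d\cdot N_0$, where $N_0=|\{z\in\mathbb{F}_q^*:z^{p+1}=1,\ \text{Tr}_1^m(z)=0\}|$. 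For $z\in\mathbb{F}_{p^2}$ one has $\text{Tr}_1^m(z)=s\cdot\text{Tr}_1^2(z)=s(z+z^p)$ with $\gcd(s,p)=1$, so the two conditions simplify to $z^{p+1}=1$ and $z^p=-z$, which together force $z^2=-1$. This has two roots in $\mathbb{F}_q^*$ exactly when $-1$ is a non-square in $\mathbb{F}_p$ (i.e.\ $p\equiv 3\pmod 4$) and none when $p\equiv 1\pmod 4$, matching the claimed formula for $n$.

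For $k$, use $k=m-\dim\ker c$ where $c(a)=(\text{Tr}_1^m(ax))_{x\in D}$. When $d=1$, the discussion preceding the theorem already gives $\widetilde{D}=\ker(\text{Tr}_1^m)$, an $\mathbb{F}_p$-hyperplane, and $a\in\ker c$ iff $a\widetilde{D}\subset\widetilde{D}$. For $a\neq 0$ this makes $\widetilde{D}$ a module over the subfield $\mathbb{F}_p[a]$, so $[\mathbb{F}_p[a]:\mathbb{F}_p]$ divides both $m-1$ and $m$, forcing $a\in\mathbb{F}_p$. Hence $\ker c=\mathbb{F}_p$ and $k=m-1$. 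The case $d=(q-1)/(p+1)$, $p\equiv 1\pmod 4$ is immediate from $D=\emptyset$.

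The remaining case, $d=(q-1)/(p+1)$ with $p\equiv 3\pmod 4$, carries the real work. The analysis of $N_0$ above identifies $D=\{x\in\mathbb{F}_q^*:x^{2d}=-1\}$, which is a coset $\tau G$ of the cyclic group $G=\{x\in\mathbb{F}_q^*:x^{2d}=1\}$ of order $2d$. Since multiplication by $\tau$ is an $\mathbb{F}_p$-linear bijection, it suffices to show $\text{span}_{\mathbb{F}_p}(G)=\mathbb{F}_q$. The key observation is that $\text{span}_{\mathbb{F}_p}(G)$ contains $1$ and is multiplicatively closed (because $G$ is a group), hence is a subring of $\mathbb{F}_q$ and therefore a subfield $\mathbb{F}_{p^e}$ with $e\mid m$. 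Then $|G|=2d$ divides $p^e-1$, and a size comparison using the factorization $2d=2(p^s-1)(p^s+1)/(p+1)$, which is available because $s$ odd gives $(p+1)\mid p^s+1$, rules out every proper divisor $e$ of $m$. Thus $e=m$, $\text{span}_{\mathbb{F}_p}(D)=\mathbb{F}_q$, and $\ker c=\{0\}$, giving $k=m$. The main obstacle is this final size-comparison step: it is routine but crucially uses the hypothesis that $s$ is odd.
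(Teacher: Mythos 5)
Your proposal is correct, but it reaches the result by a genuinely different and more elementary route than the paper. For $n$ in the case $d=\frac{q-1}{p+1}$, the paper invokes Lemma \ref{l2} (the evaluation of $\Omega(a,b)$ from the cited work of Li et al.) and then reduces $\sum_{y\in\mathbb{F}_p^*}\eta_{t(y)}^{(d,q)}$ to the order-two Gaussian period $2\eta_0^{(2,p^2)}$ via Lemma \ref{l1}; you instead count directly, using that $x\mapsto x^d$ is $d$-to-one onto the group of $(p+1)$-st roots of unity inside $\mathbb{F}_{p^2}^*$ and that the trace condition collapses to the system $z^{p+1}=1$, $z^p=-z$, hence $z^2=-1$ --- which sidesteps the exponential-sum machinery entirely. (One phrasing nit: $z^2=-1$ always has two roots in $\mathbb{F}_q$ since $\mathbb{F}_{p^2}\subseteq\mathbb{F}_q$; the correct dichotomy is that these roots satisfy $z^p=-z$ precisely when $p\equiv 3\pmod 4$, and lie in $\mathbb{F}_p$ and fail it when $p\equiv 1\pmod 4$. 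Your derivation of the system makes this harmless.) For $k$, the paper computes $W_H(c(a))$ explicitly for every $a$ and reads off injectivity from the weights being nonzero; you instead show $\ker c=\{0\}$ structurally, via nondegeneracy of the trace form plus the observation that $\mathrm{span}_{\mathbb{F}_p}(G)$ is a subfield $\mathbb{F}_{p^e}$ whose order must exceed $2d=\frac{2(q-1)}{p+1}>p^e-1$ for every proper divisor $e$ of $m$ (and, for $d=1$, the analogous $\mathbb{F}_p[a]$-module argument forcing $[\mathbb{F}_p[a]:\mathbb{F}_p]\mid\gcd(m-1,m)$). The trade-off is that the paper's heavier computation yields the two nonzero weights of $C_D$ as a byproduct --- information it reuses in the later GHW theorem and examples --- whereas your argument proves exactly the stated theorem with less apparatus and no reliance on the external evaluation of $\Omega(a,b)$.
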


\begin{proof}
By definition,
\begin{align*}
n &=\sum_{x \in \mathbb{F}_q^*}  \left( \frac{1}{p}\sum_{y \in \mathbb{F}_p} \zeta_p^{y\text{Tr}^m_1(x^d)} \right) \\
&=\frac{1}{p} \sum_{y \in \mathbb{F}_p} \sum_{x \in \mathbb{F}_q^*} \chi_m(yx^d)
\end{align*}
and for $a \in \mathbb{F}_q$, the Hamming weight of $c(a)$ is
\begin{align*}
W_H(c(a)) &=n-\sum_{x \in D} \left( \frac{1}{p}\sum_{y \in \mathbb{F}_p} \zeta_p^{y\text{Tr}^m_1(ax)} \right) \\
&=n-\frac{1}{p} \sum_{y \in \mathbb{F}_p} \sum_{x \in D} \chi_m(yax) \\
&=n-\frac{1}{p} \sum_{y \in \mathbb{F}_p} \sum_{x \in \mathbb{F}_q^*}  \left( \frac{1}{p}\sum_{z \in \mathbb{F}_p} \zeta_p^{z\text{Tr}^m_1(x^d)} \right) \chi_m(yax) \\
&=n-\frac{1}{p^2} \sum_{y \in \mathbb{F}_p} \sum_{z \in \mathbb{F}_p} \sum_{x \in \mathbb{F}_q^*}\chi_m(zx^d+yax). 
\end{align*}

\begin{enumerate}
\item If $d=1$,
\begin{align*}
n &=\frac{1}{p} \sum_{y \in \mathbb{F}_p} \sum_{x \in \mathbb{F}_q^*} \chi_m(yx) \\
&=\frac{1}{p}(q-1+(p-1)(-1)) \\
&=p^{m-1}-1.
\end{align*}

If $a \notin \mathbb{F}_p$, 
\begin{align*}
W_H(c(a)) &=n-\frac{1}{p^2} \sum_{y \in \mathbb{F}_p} \sum_{z \in \mathbb{F}_p} \sum_{x \in \mathbb{F}_q^*}\chi_m((z+ya)x) \\
&=n-\frac{1}{p^2}(q-1+(p^2-1)(-1)) \\
&=p^{m-1}-p^{m-2}
\end{align*}
and $c(a)$ is the zero vector if $a \in \mathbb{F}_p$.
So $k=m-1$.

\item If $d=\frac{q-1}{p+1}$, by Lemma \ref{l2},
\begin{align}
n &=\frac{q-1}{p}+\frac{1}{p}\sum_{y \in \mathbb{F}_p^*}\Omega(y,0) \notag \\
&=\frac{q-1}{p}+\frac{q-1}{p(p+1)}\sum_{y \in \mathbb{F}_p^*}\eta_{t(y)}^{(d,q)}, \label{n1}
\end{align}
where $y \in C_{t(y)}^{(d,q)}$ for $y \in  \mathbb{F}_p^*$.
Let $\beta=\alpha^{\frac{q-1}{p^2-1}}$, which is a primitive element of $\mathbb{F}_{p^2}$, then
\begin{align*}
\eta_{t(y)}^{(d,q)} &=\sum_{i=0}^p \chi_m \left( y\alpha^{\frac{(q-1)i}{p+1}}\right) \\
&=\sum_{i=0}^p \chi_m ( y\beta^{(p-1)i}) \\
&=\sum_{i=0}^p \chi_2 ( \text{Tr}^m_2(y\beta^{(p-1)i})) \\
&=\sum_{i=0}^p \chi_2 ( sy\beta^{(p-1)i}).
\end{align*}
Since $\text{gcd}(m,p)=1$,  $s \in \mathbb{F}_p^*$.
Let $I=\{f: \text{$f$ is an even number with } 0 \le f \le p-2\}$. 
Note that $\beta^{p+1}$ is a primitive element of $\mathbb{F}_p$, then 
\begin{align}
\sum_{y \in \mathbb{F}_p^*}\eta_{t(y)}^{(d,q)} &=\sum_{y \in \mathbb{F}_p^*} \sum_{i=0}^p \chi_2 (y\beta^{(p-1)i}) \notag \\
&=\sum_{j=0}^{p-2}\sum_{i=0}^p \chi_2 (\beta^{(p+1)j+(p-1)i}) \notag \\
&=\sum_{j=0}^{\frac{p-3}{2}}\left(\sum_{i=0}^p \chi_2 ( \beta^{2j+(p-1)(i+j)})+\sum_{i=0}^p \chi_2 ( \beta^{-2j+(p-1)(i-j)})\right)  \notag \\
&=\sum_{j \in I} \left(\sum_{i=0}^p \chi_2 ( \beta^{j+(p-1)i})+\sum_{i=0}^p \chi_2 ( \beta^{-j+(p-1)i})\right)  \notag \\
&=\sum_{j \in I} \left(\sum_{i=0}^p \chi_2 ( \beta^{j+(p-1)i})+\sum_{i=0}^p \chi_2 ( \beta^{-j+(p-1)i})\right) \notag \\
&=2 \sum_{j \in I} \eta_j^{(p-1,p^2)}=2\eta_0^{(2,p^2)}. \label{n2}
\end{align}
By Lemma \ref{l1} and formula \eqref{n1}, \eqref{n2}  
\[
n=\begin{cases}
0, & \text{if } p \equiv 1 \pmod{4}, \\
\frac{2(q-1)}{p+1}, & \text{if } p \equiv 3 \pmod{4}.
\end{cases}
\]
So if $p \equiv 1 \pmod{4}$, the defining set $D$ is empty and $k=0$. 
If $p \equiv 3 \pmod{4}$, let $a \in C_{\delta}^{(p+1,q)}$, $0 \le \delta \le p$. 
We have $\mathbb{F}_p^* \subset C_0^{(p+1,q)}$ noting that $\alpha^{\frac{q-1}{p-1}}$ is a primitive element of $\mathbb{F}_p$ and $p+1|\frac{q-1}{p-1}$.
By Lemma \ref{l1}, \ref{l2} and formula \eqref{n2}
\begin{align*}
W_H(c(a))= &n-\frac{q-p}{p^2}-\frac{1}{p^2} \sum_{z \in \mathbb{F}_p^*}\Omega(z,0)-\frac{1}{p^2} \sum_{y \in \mathbb{F}_p^*} \sum_{z \in \mathbb{F}_p^*}\Omega(z,ya) \\
= &n-\frac{q-p}{p^2}-\frac{q-1}{p^2(p+1)}\sum_{z \in \mathbb{F}_p^*}\eta_{t(z)}^{(d,q)} \\
 &-\frac{1}{p^2} \sum_{y \in \mathbb{F}_p^*} \sum_{z \in \mathbb{F}_p^*}\left( p^s \chi_m(-z \alpha^{-d \delta})-\frac{p^s+1}{p+1}\eta_{t(z)}^{(d,q)} \right) \\
=&\frac{2(q-1)}{p+1}-\frac{q-p}{p^2}+\frac{p^s(p-1)+p-q}{p^2(p+1)}\sum_{z \in \mathbb{F}_p^*}\eta_{t(z)}^{(d,q)} \\
&-(p-1)p^{s-2} \sum_{z \in \mathbb{F}_p^*} \zeta_p^{-z \text{Tr}^m_1(\alpha^{-d \delta})} \\
=& \begin{cases}
\frac{p^{s-1}(2p^s-p+1)(p-1)}{p+1}, & \text{if } \text{Tr}^m_1(\alpha^{-d \delta})=0, \\
\frac{2p^{s-1}(p^s+1)(p-1)}{p+1}, & \text{if } \text{Tr}^m_1(\alpha^{-d \delta}) \ne 0,
\end{cases}
\end{align*}
which implies that $k=m$.
\end{enumerate}
\end{proof}

Now we determine GHWs of the code $C_D$.
\begin{thm}
If $d=1$, then
\begin{equation} \label{ghw1}
d_r(C_D)=p^{m-1}\left(1-\frac{1}{p^r}\right), \ 1 \le r \le m-1.
\end{equation}   
\end{thm}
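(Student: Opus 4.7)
The plan is to work through the 1-1 correspondence between $[\widetilde{D},r]_p$ and $[C_D,r]_p$ noted just before the previous theorem. Fix $V \in [\widetilde{D},r]_p$ and set $U = c(V) \in [C_D,r]_p$. Since $c(a)_x = \text{Tr}^m_1(ax)$, a coordinate $x \in D$ fails to lie in $\text{Supp}(U)$ precisely when $\text{Tr}^m_1(ax)=0$ for every $a \in V$. Introducing the non-degenerate bilinear form $(x,y) \mapsto \text{Tr}^m_1(xy)$ on $\mathbb{F}_q$ and writing $V^{\perp}$ for the orthogonal complement of $V$, this identifies
\[
|\text{Supp}(U)| \;=\; n - |D \cap V^{\perp}| \;=\; n - \bigl(|V^{\perp} \cap \widetilde{D}|-1\bigr),
\]
so the task reduces to evaluating $|V^{\perp} \cap \widetilde{D}|$.

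By construction $\widetilde{D} = \mathbb{F}_p^{\perp}$ and $\dim_{\mathbb{F}_p} V^{\perp} = m-r$. Since $V \subseteq \widetilde{D} = \mathbb{F}_p^{\perp}$, duality gives $\mathbb{F}_p \subseteq V^{\perp}$. The crucial step is to show this containment forces $V^{\perp}$ out of $\widetilde{D}$: if one had $V^{\perp} \subseteq \mathbb{F}_p^{\perp}$, then taking perps would yield $\mathbb{F}_p \subseteq V \subseteq \mathbb{F}_p^{\perp}$, hence $1 \in \mathbb{F}_p^{\perp}$. But $\text{Tr}^m_1(1) = m$ in $\mathbb{F}_p$, and the standing assumption $\gcd(m,p)=1$ forces this to be nonzero, a contradiction. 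Therefore $V^{\perp} \cap \widetilde{D}$ has codimension exactly $1$ in $V^{\perp}$, i.e.\ $|V^{\perp} \cap \widetilde{D}| = p^{m-r-1}$.

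Plugging back in gives $|\text{Supp}(U)| = (p^{m-1}-1) - (p^{m-r-1}-1) = p^{m-1}(1 - p^{-r})$, a value that is \emph{independent} of the choice of $V \in [\widetilde{D},r]_p$. Consequently every $r$-dimensional subcode of $C_D$ has the same support size, and the minimum in the definition of $d_r(C_D)$ is attained trivially, proving \eqref{ghw1}. As a sanity check, the case $r=1$ recovers $p^{m-1}-p^{m-2}$, matching the Hamming weight computed for $a \notin \mathbb{F}_p$ in the previous theorem. The only delicate point is the duality argument ensuring $V^\perp \not\subseteq \widetilde{D}$; everything else is bookkeeping.
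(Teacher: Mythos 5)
Your proof is correct and takes essentially the same route as the paper: both rest on the correspondence $[\widetilde{D},r]_p \leftrightarrow [C_D,r]_p$ and on the fact that, because $\widetilde{D}=\ker(\text{Tr}^m_1)$ is itself a hyperplane, the relevant intersection counts $|D\cap H|$ are forced to equal $p^{m-1-r}-1$. The paper's version is a one-line computation invoking a modified Lemma \ref{t1} ``by the same principle,'' whereas you supply the trace-duality justification it omits --- in particular pinpointing that $\gcd(m,p)=1$ is exactly what forces $V^{\perp}\not\subseteq\widetilde{D}$ --- and you obtain the slightly stronger observation that every $r$-dimensional subcode has the same support size.
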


\begin{proof}
Indeed we can not use Lemma \ref{t1} directly, but by the same principle we have 
\begin{align*}
d_r(C_D) &=n-\max\{ \ |D \cap H| : H \in [\widetilde{D},m-1-r]_p \ \} \\
&=p^{m-1}-1-(p^{m-1-r}-1) \\
&=p^{m-1}\left(1-\frac{1}{p^r}\right)
\end{align*} 
for $1 \le r \le m-1$.
\end{proof}

\begin{rk}
By formula \eqref{ghw1} it's easy to check that $C_D$ is $(m-1)$-MDS and $d_r(C_D)$ meets the Plotkin-like bound and the Griesmer-like bound for all $1 \le r \le m-1$.
\end{rk}

\begin{ex}
Let $(p,m)=(3,3)$ and $d=1$, then $C_D$ is a $[8,2]$ linear code over $\mathbb{F}_3$ with $d_1(C_D)=6$ and  $d_2(C_D)=8$
$C_D$ is 2-MDS. 
$d_r(C_D)$ meets the Plotkin-like bound and the Griesmer-like bound for all $1 \le r \le 2$.
\end{ex}

\begin{ex}
Let $(p,m)=(3,6)$ and $d=1$, then $C_D$ is a $[242,5]$ linear code over $\mathbb{F}_3$ with $d_1(C_D)=162$, $d_2(C_D)=216$, $d_3(C_D)=234$, $d_4(C_D)=240$ and $d_5(C_D)=242$.
$C_D$ is 5-MDS. 
$d_r(C_D)$ meets the Plotkin-like bound and the Griesmer-like bound for all $1 \le r \le 5$.
\end{ex}

\begin{thm}
If $d=\frac{q-1}{p+1}$ and $ p \equiv 3 \pmod{4}$, then
\begin{equation} \label{ghw2}
d_r(C_D)=\begin{cases}
\frac{p^s(2p^s+1-p)}{p+1}\left(1-\frac{1}{p^r}\right), & \text{if } 1 \le r \le s, \\
\frac{2(q-1)}{p+1}+1-p^{m-r}, & \text{if } s \le r \le m.
\end{cases}
\end{equation}   
\end{thm}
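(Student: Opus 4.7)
The plan is to apply Lemma~\ref{t2}. Since $\dim_{\mathbb{F}_p}(C_D)=m$, this reduces the problem to evaluating the character sum $S(a):=\sum_{x\in D}\chi_m(ax)$ for $a\in\mathbb{F}_q^*$. Expanding $\mathbf{1}_D$ via orthogonality gives $S(a)=-\tfrac{1}{p}+\tfrac{1}{p}\sum_{z\in\mathbb{F}_p^*}\Omega(z,a)$, and a direct application of Lemma~\ref{l2} (the hypotheses $p$, $h=s$, and $(p^f+1)/M=1$ all odd are met) together with the identity $\sum_{z\in\mathbb{F}_p^*}\eta_{t(z)}^{(d,q)}=p-1$ (from \eqref{n2} combined with Lemma~\ref{l1} for $p\equiv3\pmod 4$) shows after a short calculation that $S(a)$ takes only two values,
\[
A=\frac{p^{s+1}-p^s-2}{p+1}\quad\text{and}\quad B=-\frac{2(p^s+1)}{p+1},
\]
depending on whether $\mathrm{Tr}_1^m(\alpha^{-d\delta})$ vanishes or not, where $a\in C_\delta^{(p+1,q)}$. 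The key identification is that this condition is equivalent to $a\in D$: writing $a=\alpha^{\delta}(\alpha^{p+1})^j$ gives $a^d=\alpha^{d\delta}$, and since $a^d\in\langle\alpha^d\rangle\subset\mathbb{F}_{p^2}^*$ has order dividing $p+1$ we have $(a^d)^{-1}=(a^d)^p$, so Frobenius invariance of the trace yields $\mathrm{Tr}_1^m(\alpha^{-d\delta})=\mathrm{Tr}_1^m(a^d)$, which vanishes precisely when $a\in D$.

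The crucial consequence is $A-B=p^s$, so $N(H_r)=[n+(p^r-1)B]/p^r+(p^s/p^r)|D\cap H_r|$ and therefore
\[
d_r(C_D)=\frac{(p^r-1)(n-B)-p^sM_r}{p^r},\qquad M_r:=\max_{H_r\in[\mathbb{F}_q,r]_p}|D\cap H_r|.
\]
For $1\le r\le s$, the bound $M_r\le p^r-1$ is trivial since $|H_r^*|=p^r-1$. For the matching lower bound I will show that for any $\eta\in D$ the $s$-dimensional $\mathbb{F}_p$-subspace $\eta\mathbb{F}_{p^s}$ is contained in $\widetilde{D}$: for $y\in\mathbb{F}_{p^s}^*$ one has $y^d\in\mathbb{F}_{p^s}\cap\langle\alpha^d\rangle\subseteq\mathbb{F}_{p^s}\cap\mathbb{F}_{p^2}=\mathbb{F}_p$ (the last equality using $s$ odd), so $y^d\in\mathbb{F}_p^*\cap\langle\alpha^d\rangle=\{\pm1\}$, whence $(\eta y)^d=\pm\eta^d\in\ker\mathrm{Tr}_1^m$. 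Choosing any $r$-dimensional $\mathbb{F}_p$-subspace of $\eta\mathbb{F}_{p^s}$ realizes $M_r=p^r-1$, and a brief simplification recovers the first branch of \eqref{ghw2}.

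For $s\le r\le m$ I will use Fourier duality. Writing $\mathbf{1}_{H_r}(x)=p^{-(m-r)}\sum_{a\in H_r^\perp}\chi_m(ax)$ with $H_r^\perp:=\{a\in\mathbb{F}_q:\mathrm{Tr}_1^m(ax)=0~\forall x\in H_r\}$ and summing over $x\in D$ produces
\[
|D\cap H_r|=\frac{n+(p^{m-r}-1)B}{p^{m-r}}+\frac{p^s}{p^{m-r}}|D\cap H_r^\perp|.
\]
Since $H\mapsto H^\perp$ is a bijection $[\mathbb{F}_q,r]_p\to[\mathbb{F}_q,m-r]_p$, this gives $M_r=[n+(p^{m-r}-1)B+p^sM_{m-r}]/p^{m-r}$; as $m-r\le s$, the previous case supplies $M_{m-r}=p^{m-r}-1$, and substitution plus routine algebra yields the second branch of \eqref{ghw2} (the two branches coinciding at $r=s$). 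The main obstacle is establishing the structural inclusion $\eta\mathbb{F}_{p^s}\subseteq\widetilde{D}$, which crucially rests on the parity of $s$; once in hand, the Fourier duality automatically propagates the lower-range formula $M_r=p^r-1$ to the full upper range.
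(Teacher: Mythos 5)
Your proof is correct and follows the same overall strategy as the paper's: reduce via Lemma \ref{t2} to evaluating $\sum_{x\in D}\chi_m(ax)$ with Lemma \ref{l2}, and realize the extremal subspace inside a coset $\eta\mathbb{F}_{p^s}$ of the subfield, using that $y^d=\pm1$ for $y\in\mathbb{F}_{p^s}^*$ (the paper shows $y^d=1$ outright). Two points of execution differ, both in the direction of cleanliness. First, you observe that $\mathrm{Tr}_1^m(\alpha^{-d\delta})=\mathrm{Tr}_1^m(a^d)$ for $a\in C_\delta^{(p+1,q)}$, via $(a^d)^{-1}=(a^d)^{p}$ and Frobenius-invariance of the trace, so the two-valued sum $S(a)$ is governed exactly by whether $a\in D$ and everything collapses to maximizing $|D\cap H_r|$; the paper instead works with the auxiliary set $H_r^0$ and must exhibit an explicit class $\delta_0=\frac{p+1}{4}$ with vanishing trace in order to build the optimal subspace, whereas you only need $D\neq\emptyset$. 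Second, for $s\le r\le m$ the paper simply switches to Lemma \ref{t1} (the trivial upper bound $p^{m-r}-1$ on $|D\cap H|$ for $(m-r)$-dimensional $H$ is attained inside $x_2\mathbb{F}_{p^s}\subset\widetilde{D}$); your Fourier-duality identity relating $|D\cap H_r|$ to $|D\cap H_r^\perp|$ reaches the same conclusion but essentially re-derives the equivalence of Lemmas \ref{t1} and \ref{t2}, so it is valid but could be shortened by citing Lemma \ref{t1} directly. The computations I checked --- the values $A$ and $B$, the identity $A-B=p^s$, the evaluation $\sum_{z\in\mathbb{F}_p^*}\eta_{t(z)}^{(d,q)}=p-1$ from \eqref{n2} and Lemma \ref{l1}, and the agreement of the two branches at $r=s$ --- are all right.
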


\begin{proof}
With the symbols defined in Lemma \ref{t2}. 
For $H_r \in [\mathbb{F}_q,r]_p$, let \[
H_r^0=H_r \bigcap \left(\bigcup_{\mbox{\tiny $\begin{array}{c}
0 \le \delta \le p, \\
\text{Tr}^m_1(\alpha^{-d \delta})=0 \end{array}$}}C_{\delta}^{(p+1,q)}\right).
\] 
By Lemma \ref{l1}, \ref{l2} and formula \eqref{n2}
\begin{align*}
N(H_r) &=\frac{n}{p^r}+\frac{1}{p^r} \sum_{a \in H_r^*} \sum_{x \in \mathbb{F}_q^*} \left( \frac{1}{p}\sum_{y \in \mathbb{F}_p} \zeta_p^{y\text{Tr}^m_1(x^d)} \right) \chi_m (a x) \\
&=\frac{n}{p^r}+\frac{1}{p^{r+1}}\sum_{a \in H_r^*}\sum_{y \in \mathbb{F}_p}\sum_{x \in \mathbb{F}_q^*} \chi_m(yx^d+a x)  \\ 
&=\frac{pn+1-p^r}{p^{r+1}}+\frac{1}{p^{r+1}}\sum_{a \in H_r^*}\sum_{y \in \mathbb{F}_p^*}\Omega(y,a). \\
&=\frac{pn+1-p^r}{p^{r+1}}+\frac{1}{p^{r+1}}\sum_{a \in H_r^*}\sum_{y \in \mathbb{F}_p^*}\left( p^s \chi_m(-y \alpha^{-d \delta})-\frac{p^s+1}{p+1}\eta_{t(y)}^{(d,q)} \right)  \\
&=\frac{pn+1-p^r}{p^{r+1}}-\frac{(p^r-1)(p^s+1)}{p^{r+1}(p+1)}\sum_{y \in \mathbb{F}_p^*}\eta_{t(y)}^{(d,q)}+p^{s-r-1} \sum_{a \in H_r^*}\sum_{y \in \mathbb{F}_p^*} \zeta_p^{-y \text{Tr}^m_1(\alpha^{-d \delta})} \\
&=\frac{pn+1-p^r}{p^{r+1}}-\frac{(p-1)(p^r-1)(p^s+1)}{p^{r+1}(p+1)}+p^{s-r-1} \left( (p-1)|H_r^0|-|H_r^* \backslash H_r^0| \right).
\end{align*}
Let $\delta_0=\frac{p+1}{4}$ and $\beta=\alpha^{\frac{q-1}{p^2-1}}$, then $\text{Tr}^m_1(\alpha^{-d \delta_0})=\text{Tr}^m_1(\beta^{-\frac{p^2-1}{4}})=s\text{Tr}^2_1(\beta^{-\frac{p^2-1}{4}})=s\beta^{-\frac{p^2-1}{4}}(1+\beta^{-\frac{(p^2-1)(p-1)}{4}})=s\beta^{-\frac{p^2-1}{4}}(1+\beta^{\frac{(p^2-1)}{2}-\delta_0(p^2-1)})=0$.

If $1 \le r \le s$, note that $\mathbb{F}_{p^s}^* \subset C_0^{(p+1,q)}$ since $\alpha^{\frac{q-1}{p^s-1}}$ is a primitive element of $\mathbb{F}_{p^s}$ and $p+1|\frac{q-1}{p^s-1}$.
We can choose $U_1 \in [x_1\mathbb{F}_{p^s},r]_p$, where $x_1 \in C_{\delta_0}^{(p+1,q)}$, and  $\max\{ \ N(H_r) : H_r \in [\mathbb{F}_q,r]_p \ \}=N(U_1)=\frac{pn+1-p^r}{p^{r+1}}-\frac{(p-1)(p^r-1)(p^s+1)}{p^{r+1}(p+1)}+p^{s-r-1}(p-1)(p^r-1)$. 
By Lemma \ref{t2}, 
\[
d_r(C_D)=\frac{p^s(2p^s+1-p)}{p+1}\left(1-\frac{1}{p^r}\right).
\]

If $s \le r \le m$, note that $a^d=\left(a^{p^s-1}\right)^{\frac{p^s+1}{p+1}}=1$ for any $a \in \mathbb{F}_{p^s}^*$, which implies that $x_2 \mathbb{F}_{p^s} \in [\mathbb{F}_q,s]_p$ is contained in $\widetilde{D}$ for any $x_2 \in D$.
We can choose $U_2 \in [x_2\mathbb{F}_{p^s},m-r]_p$ and $\max \{ \ |D \cap H| : H \in [\mathbb{F}_q,m-r]_p \ \}=|U_2^*|=p^{m-r}-1$. 
By Lemma \ref{t1}, 
\[
d_r(C_D)=\frac{2(q-1)}{p+1}+1-p^{m-r}.
\]
\end{proof}

\begin{rk}
By formula \eqref{ghw2} it's easy to check that $C_D$ is $m$-MDS, $d_r(C_D)$ meets the Griesmer-like bound for $1 \le r \le s$ and $d_m(C_D)$ meets the Plotkin-like bound.
\end{rk}

\begin{ex}
Let $(p,m)=(3,2)$ and $d=\frac{p^m-1}{p+1}=2$, then $C_D$ is a $[4,2]$ linear code over $\mathbb{F}_3$ with $d_1(C_D)=2$ and $d_2(C_D)=4$.
$C_D$ is 2-MDS.
$d_2(C_D)$ meets the Plotkin-like bound.
\end{ex}

\begin{ex}
Let $(p,m)=(3,6)$ and $d=\frac{p^m-1}{p+1}=182$, then $C_D$ is a $[364,6]$ linear code over $\mathbb{F}_3$ with $d_1(C_D)=234$, $d_2(C_D)=312$, $d_3(C_D)=338$, $d_4(C_D)=356$, $d_5(C_D)=362$ and $d_6(C_D)=364$.
$C_D$ is 6-MDS.
$d_r(C_D)$ meets the Griesmer-like bound for $1 \le r \le 3$. 
$d_6(C_D)$ meets the Plotkin-like bound.
\end{ex}

\section{Concluding remarks} \label{clu}
The weight hierarchy of a code has been examined at least in the following cases:
\begin{enumerate}
\item Hamming codes
\item Golay codes
\item Product codes.
\item Codes from classical varieties: Reed–Muller codes, Algebraic geometric codes, codes from quadrics, Hermitian varieties, Grassmannians, Del Pezzo surfaces. 
\item Binary Kasami codes.
\item Cyclic and trace codes: BCH, Melas.
\item Codes parameterized by the edges of simple graphs.
\end{enumerate}
A survey up to known results until 1995 was done in \cite{tsfasman1995geometric}. 
Recent results can be found in \cite{yang2015generalized,ballico2016higher,xiong2016weight,sarabia2016generalized,hao2017weight,li2017weight,jian2017generalized,johnsen2017generalized,beelen2018note,sarabia2018second,beelen2018generalized,li2018class,liu2018some}.

\section*{Acknowledgments}
G.Jian and Z.Lin are supported by National Natural Science Foundation (NSF) of China (grant no.61625301).
R.Feng is supported by the NSFC-Genertec Joint Fund For Basic Research (grant no.U1636104).

\section*{References}
\bibliographystyle{elsarticle-harv}
\bibliography{thesis} 

\begin{thebibliography}{28}
\expandafter\ifx\csname natexlab\endcsname\relax\def\natexlab#1{#1}\fi
\expandafter\ifx\csname url\endcsname\relax
  \def\url#1{\texttt{#1}}\fi
\expandafter\ifx\csname urlprefix\endcsname\relax\def\urlprefix{URL }\fi

\bibitem[{Ballico and Marcolla(2016)}]{ballico2016higher}
Ballico, E., Marcolla, C., 2016. {Higher Hamming weights for locally
  recoverable codes on algebraic curves}. Finite Fields and Their Applications
  40, 61--72.

\bibitem[{Beelen(2018)}]{beelen2018note}
Beelen, P., 2018. {A note on the generalized Hamming weights of Reed--Muller
  codes}. Applicable Algebra in Engineering, Communication and Computing,
  1--10.

\bibitem[{Beelen and Datta(2018)}]{beelen2018generalized}
Beelen, P., Datta, M., 2018. {Generalized Hamming weights of affine Cartesian
  codes}. Finite Fields and Their Applications 51, 130--145.

\bibitem[{Ding and Niederreiter(2007)}]{ding2007cyclotomic}
Ding, C., Niederreiter, H., 2007. Cyclotomic linear codes of order 3. IEEE
  Transactions on Information Theory 53~(6), 2274--2277.

\bibitem[{Ding and Ding(2014)}]{ding2014binary}
Ding, K., Ding, C., 2014. Binary linear codes with three weights. IEEE
  Communications Letters 18~(11), 1879--1882.

\bibitem[{Ding and Ding(2015)}]{ding2015class}
Ding, K., Ding, C., 2015. A class of two-weight and three-weight codes and
  their applications in secret sharing. IEEE Transactions on Information Theory
  61~(11), 5835--5842.

\bibitem[{Forney(1994)}]{forney1994dimension}
Forney, G.~D., 1994. Dimension/length profiles and trellis complexity of linear
  block codes. IEEE Transactions on Information Theory 40~(6), 1741--1752.

\bibitem[{Guruswami(2003)}]{guruswami2003list}
Guruswami, V., 2003. List decoding from erasures: bounds and code
  constructions. IEEE Transactions on Information Theory 49~(11), 2826--2833.

\bibitem[{Hao et~al.(2017)Hao, Xia, Chen, and Fu}]{hao2017weight}
Hao, J., Xia, S.-T., Chen, B., Fu, F.-W., 2017. On the weight hierarchy of
  locally repairable codes. In: 2017 IEEE Information Theory Workshop (ITW).
  IEEE, pp. 31--35.

\bibitem[{Helleseth et~al.(1977)Helleseth, Klove, and
  Mykkeltveit}]{helleseth1977weight}
Helleseth, T., Klove, T., Mykkeltveit, J., 1977. The weight distribution of
  irreducible cyclic codes with block lengths {$n_1 ((q^1- 1)/ N)$}. Discrete
  Mathematics 18~(2), 179--211.

\bibitem[{Heng and Yue(2017)}]{heng2017construction}
Heng, Z., Yue, Q., 2017. A construction of $q$-ary linear codes with two
  weights. Finite Fields and Their Applications 48, 20--42.

\bibitem[{Jian et~al.(2017)Jian, Feng, and Wu}]{jian2017generalized}
Jian, G., Feng, R., Wu, H., 2017. {Generalized Hamming weights of three classes
  of linear codes}. Finite Fields and Their Applications 45, 341--354.

\bibitem[{Johnsen and Verdure(2017)}]{johnsen2017generalized}
Johnsen, T., Verdure, H., 2017. {Generalized Hamming weights for almost affine
  codes}. IEEE Transactions on Information Theory 63~(4), 1941--1953.

\bibitem[{Klove(1978)}]{klove1978weight}
Klove, T., 1978. The weight distribution of linear codes over {$GF (q^l)$}
  having generator matrix over {$GF (q)$}. Discrete Mathematics 23~(2),
  159--168.

\bibitem[{Li and Yue(2015)}]{li2015walsh}
Li, C., Yue, Q., 2015. {The Walsh transform of a class of monomial functions
  and cyclic codes}. Cryptography and Communications 7~(2), 217--228.

\bibitem[{Li(2018)}]{li2018class}
Li, F., 2018. A class of cyclotomic linear codes and their generalized
  {Hamming} weights. Applicable Algebra in Engineering, Communication and
  Computing 29~(6), 501--511.

\bibitem[{Li(2017)}]{li2017weight}
Li, S., 2017. The weight hierarchy of a family of cyclic codes with arbitrary
  number of nonzeroes. Finite Fields and Their Applications 45, 355--371.

\bibitem[{Lidl and Niederreiter(1997)}]{lidl1997finite}
Lidl, R., Niederreiter, H., 1997. Finite fields. Vol.~20. Cambridge university
  press.

\bibitem[{Liu and Liu(2018)}]{liu2018some}
Liu, Y., Liu, Z., 2018. On some classes of codes with a few weights. Advances
  in Mathematics of Communications 12~(2).

\bibitem[{Sarabia et~al.(2018)Sarabia, Camps, Sarmiento, and
  Villarreal}]{sarabia2018second}
Sarabia, M.~G., Camps, E., Sarmiento, E., Villarreal, R.~H., 2018. {The second
  generalized Hamming weight of some evaluation codes arising from a projective
  torus}. Finite Fields and Their Applications 52, 370--394.

\bibitem[{Sarabia and M{\'a}rquez(2016)}]{sarabia2016generalized}
Sarabia, M.~G., M{\'a}rquez, C.~R., 2016. {Generalized Hamming weights and some
  parameterized codes}. Discrete Mathematics 339~(2), 813--821.

\bibitem[{Tang et~al.(2016)Tang, Li, Qi, Zhou, and Helleseth}]{tang2016linear}
Tang, C., Li, N., Qi, Y., Zhou, Z., Helleseth, T., 2016. Linear codes with two
  or three weights from weakly regular bent functions. IEEE Transactions on
  Information Theory 62~(3), 1166--1176.

\bibitem[{Tsfasman and Vladut(1995)}]{tsfasman1995geometric}
Tsfasman, M.~A., Vladut, S.~G., 1995. Geometric approach to higher weights.
  IEEE Transactions on Information Theory 41~(6), 1564--1588.

\bibitem[{Wei(1991)}]{wei1991generalized}
Wei, V.~K., 1991. Generalized {Hamming} weights for linear codes. IEEE
  Transactions on Information Theory 37~(5), 1412--1418.

\bibitem[{Xiong et~al.(2016)Xiong, Li, and Ge}]{xiong2016weight}
Xiong, M., Li, S., Ge, G., 2016. The weight hierarchy of some reducible cyclic
  codes. IEEE Transactions on Information Theory 62~(7), 4071--4080.

\bibitem[{Xu et~al.(2018)Xu, Cao, Xu, and Ping}]{xu2018complete}
Xu, G., Cao, X., Xu, S., Ping, J., 2018. Complete weight enumerators of a class
  of linear codes with two weights. Discrete Mathematics 341~(2), 525--535.

\bibitem[{Yang et~al.(2015)Yang, Li, Feng, and Lin}]{yang2015generalized}
Yang, M., Li, J., Feng, K., Lin, D., 2015. Generalized {Hamming} weights of
  irreducible cyclic codes. IEEE Transactions on Information Theory 61~(9),
  4905--4913.

\bibitem[{Zhou et~al.(2015)Zhou, Li, Fan, and Helleseth}]{zhou2015linear}
Zhou, Z., Li, N., Fan, C., Helleseth, T., 2015. Linear codes with two or three
  weights from quadratic bent functions. Designs, Codes and Cryptography,
  1--13.

\end{thebibliography}
\end{document}